\begin{document}
\title{Competitive Analysis for Two Variants of Online Metric Matching Problem\thanks{This work was partially supported by the joint project of Kyoto University and Toyota Motor Corporation, titled ``Advanced Mathematical Science for Mobility Society'' and JSPS KAKENHI Grant Numbers JP16K00017 and JP20K11677.} }
%
%
\author{Toshiya Itoh\inst{1} \and Shuichi Miyazaki\inst{2} \and Makoto Satake\inst{3}}
\authorrunning{T. Itoh et al.}
%
\institute{Department of Mathematical and Computing Science, Tokyo Institute of Technology, 2-12-1 Ookayama, Meguro-ku, Tokyo 152-8550, Japan \\ \email{titoh@c.titech.ac.jp}
\and
Academic Center for Computing and Media Studies, Kyoto University, Yoshida-Honmachi, Sakyo-ku, Kyoto 606-8501, Japan \\
\email{shuichi@media.kyoto-u.ac.jp} \and
Graduate School of Informatics, Kyoto University, \\ Yoshida-Honmachi, Sakyo-ku, Kyoto 606-8501, Japan \\
\email{satake@net.ist.i.kyoto-u.ac.jp}}

\sloppy
\maketitle              
\begin{abstract}
In this paper, we study two variants of the online metric matching problem.
The first problem is the online metric matching problem where all the servers are placed at one of two positions in the metric space.
We show that a simple greedy algorithm achieves the competitive ratio of 3 and give a matching lower bound. 
The second problem is the online facility assignment problem on a line, where servers have capacities, servers and requests are placed on 1-dimensional line, and the distances between any two consecutive servers are the same.
We show lower bounds $1+ \sqrt{6}$ $(> 3.44948)$, $\frac{4+\sqrt{73}}{3}$ $(>4.18133)$ and $\frac{13}{3}$ $(>4.33333)$ on the competitive ratio when the numbers of servers are 3, 4 and 5, respectively.

\keywords{Online algorithm, Competitive analysis, Online matching problem}
\end{abstract}
\section{Introduction}
The online metric matching problem was introduced independently by Kalyanasundaram and Pruhs \cite{kalyanasundaram93} and Khuller, Mitchell and Vazirani \cite{khuller}.
In this problem, $n$ {\em servers} are placed on a given metric space.
Then $n$ {\em requests}, which are points on the metric space, are given to the algorithm one-by-one in an online fashion.
The task of an online algorithm is to match each request immediately to one of $n$ servers.
If a request is matched to a server, then it incurs a cost which is equivalent to the distance between them.
The goal of the problem is to minimize the sum of the costs.
The papers \cite{kalyanasundaram93} and \cite{khuller} presented a deterministic online algorithm (called {\em Permutation} in \cite{kalyanasundaram93}) and showed that it is $(2n-1)$-competitive and optimal.

In 1998, Kalyanasundaram and Pruhs \cite{kalyanasundaram98} posed a question whether we can have a better competitive ratio by restricting the metric space to a line, and introduced the problem called the {\em online matching problem on a line}.
Since then, this problem has been extensively studied, but there still remains a large gap between the best known lower bound 9.001 \cite{fuchs} and upper bound $O(\log n)$ \cite{raghvendra18} on the competitive ratio.

In 2020, Ahmed, Rahman and Kobourov \cite{ahmed} proposed a problem called the {\em online facility assignment problem} and considered it on a line, which we denote {\em OFAL} for short.
In this problem, all the servers (which they call {\em facilities}) and requests (which they call {\em customers}) lie on a 1-dimensional line, and the distance between every pair of adjacent servers is the same.
Also, each server has a {\em capacity}, which is the number of requests that can be matched to the server.
In their model, all the servers are assumed to have the same capacity.
Let us denote OFAL($k$) the OFAL problem where the number of servers is $k$. 
Ahmed et al. \cite{ahmed} showed that for OFAL($k$) the greedy algorithm is $4k$-competitive for any $k$ and a deterministic algorithm {\em Optimal-fill} is $k$-competitive for any $k>2$.

\subsection{Our contributions}

In this paper, we study a variant of the online metric matching problem where all the servers are placed at one of two positions in the metric space.
This is equivalent to the case where there are two servers with capacities.
We show that a simple greedy algorithm achieves the competitive ratio of 3 for this problem, and show that any deterministic online algorithm has competitive ratio at least 3.
 
We also study OFAL($k$) for small $k$.
Specifically, we show lower bounds $1+ \sqrt{6}$ $(> 3.44948)$, $\frac{4+\sqrt{73}}{3}$ $(>4.18133)$ and $\frac{13}{3}$ $(>4.33333)$ on the competitive ratio for OFAL($3$), OFAL($4$) and OFAL($5$), respectively.
We remark that our lower bounds $1+ \sqrt{6}$ for OFAL($3$) and $\frac{4+\sqrt{73}}{3}$ for OFAL($4$) do not contradict the above-mentioned upper bound of Optimal-fill, since upper bounds by Ahmed et al.~\cite{ahmed} are with respect to the {\em asymptotic} competitive ratio, while our lower bounds are with respect to the {\em strict} competitive ratio (see Sec. \ref{subsec:CR}).

\subsection{Related work}

In 1990, Karp, Vazirani and Vazirani \cite{karp} first studied an online version of the matching problem.
They studied the online matching problem on unweighted bipartite graphs with $2n$ vertices that contain a perfect matching, where the goal is to maximize the size of the obtained matching.
In \cite{karp}, they first showed that a deterministic greedy algorithm is $\frac{1}{2}$-competitive and optimal. 
They also presented a randomized algorithm {\em Ranking} and showed that it is $(1-\frac{1}{e})$-competitive and optimal. See \cite{mehta} for a survey of the online matching problem.

As mentioned before, Kalyanasundaram and Pruhs \cite{kalyanasundaram93} studied the online metric matching problem and showed that the algorithm {\em Permutation} is $(2n-1)$-competitive and optimal. Probabilistic algorithms for this problem were studied in \cite{meyerson, bansal}.

Kalyanasundaram and Pruhs \cite{kalyanasundaram98} studied the online matching problem on a line. They gave two conjectures that the competitive ratio of this problem is 9 and that the {\em Work-Function} algorithm has a constant competitive ratio, both of which were later disproved in \cite{koutsoupias} and \cite{fuchs}, respectively. This problem was studied in \cite{antoniadis14, raghvendra16, nayyar, raghvendra18, antoniadis18, gupta}, and the best known deterministic algorithm is the {\em Robust Matching} algorithm \cite{raghvendra16}, which is $\Theta(\log n)$-competitive \cite{nayyar, raghvendra18}. 

Besides the problem on a line, Ahmed, Rahman and Kobourov \cite{ahmed} studied the online facility assignment problem on an unweighted graph $G(V, E)$.  They showed that the greedy algorithm is $2|E|$-competitive and {\em Optimal-Fill} is $\frac{|E|k}{r}$-competitive, where $|E|$ is the number of edge of $G$ and $r$ is the radius of $G$.

\section{Preliminaries}

In this section, we give definitions and notations.

\subsection{Online metric matching problem with two servers}\label{subsec:1}

We define the online metric matching problem with two servers, denoted OMM($2$) for short.
Let $(X, d)$ be a metric space, where $X$ is a (possibly infinite) set of points and $d( \cdot, \cdot)$ is a distance function.
Let $S = \{s_1, s_2 \}$ be a set of servers and $R=\{r_1, r_2, \ldots, r_n\}$ be a set of requests.
A server $s_i$ is characterized by the position $p(s_i) \in X$ and the capacity $c_i$ that satisfies $c_1 + c_2 = n$.
This means that $s_i$ can be matched with at most $c_i$ requests ($i=1, 2$).
A request $r_i$ is also characterized by the position $p(r_i) \in X$.

$S$ is given to an online algorithm in advance, while requests are given one-by-one from $r_1$ to $r_n$.
At any time of the execution of an algorithm, a server is called {\em free} if the number of requests matched with it is less than its capacity, and {\em full} otherwise.
When a request $r_i$ is revealed, an online algorithm must match $r_i$ with one of free servers.
If $r_{i}$ is matched with the server $s_{j}$, the pair $(r_i, s_j)$ is added to the current matching and the cost $d(r_i, s_j)$ is incurred for this pair.
The cost of the matching is the sum of the costs of all the pairs contained in it.
The goal of OMM($2$) is to minimize the cost of the final matching.

\subsection{Online facility assignment problem on a line}

We give the definition of the online facility assignment problem on a line with $k$ servers, denoted OFAL($k$).
We state only differences from Sec.~\ref{subsec:1}.
The set of servers is $S = \{s_1, s_2, \ldots, s_k \}$ and all the servers have the same capacity $\ell$, i.e., $c_{i}=\ell$ for all $i$.
The number of requests must satisfy $n \leq \sum_{i=1}^{k} c_i = k \ell$.
All the servers and requests are placed on a real number line, so their positions are expressed by a real, i.e., $p(s_i) \in \mathbb{R}$ and $p(r_j) \in \mathbb{R}$.
Accordingly, the distance function is written as $d(r_i, s_j)=| p(r_i) - p(s_j) |$.
We assume that the servers are placed in an increasing order of their indices, i.e., $p(s_{1}) \leq p(s_{2}) \leq \ldots \leq p(s_{k})$.
In this problem, any distance between two consecutive servers is the same, that is, $| p(s_i) - p(s_{i+1}) | = d$ ($1 \le i \le n-1$) for some constant $d$.
Without loss of generality, we let $d=1$.



\subsection{Competitive ratio}\label{subsec:CR}

To evaluate the performance of an online algorithm, we use the {\em strict competitive ratio}.
(Hereafter, we omit ``strict''.)
For an input $\sigma$, let $ALG(\sigma)$ and $OPT(\sigma)$ be the costs of the matchings obtained by an online algorithm $ALG$ and an optimal offline algorithm $OPT$, respectively.
Then the competitive ratio of $ALG$ is the supremum of $c$ that satisfies $\frac{ALG(\sigma)}{OPT(\sigma)} \le c$ for any input $\sigma$.

\section{Online Metric Matching Problem with Two Servers}

\subsection{Upper bound}

In this section, we define a greedy algorithm $GREEDY$ for OMM($2$) and show that it is 3-competitive.

\begin{definition}\label{greedy}
When a request is given, $GREEDY$ matches it with the closest free server. If a given request is equidistant from the two servers and both servers are free, $GREEDY$ matches this request with $s_1$. 
\end{definition}

In the following discussion, we fix an optimal offline algorithm $OPT$.
If a request $r$ is matched with the server $s_x$ by $GREEDY$ and with $s_y$ by $OPT$, we say that $r$ is of {\em type} $\langle s_x, s_y \rangle$.
We then define some properties of inputs.

\begin{definition}\label{antiOpt}
Let $\sigma$ be an input to OMM($2$).  If every request in $\sigma$ is matched with a different server by $GREEDY$ and $OPT$, $\sigma$ is called {\em anti-opt}.
\end{definition}

\begin{definition}\label{sidePriority}
Let $\sigma$ be an input to OMM($2$).
Suppose that $GREEDY$ matches its first request $r_{1}$ to the server $s_{x} \in \{ s_{1}, s_{2} \}$. 
If $GREEDY$ matches $r_{1}$ through $r_{c_{x}}$ to $s_{x}$ (note that $c_{x}$ is the capacity of $s_{x}$) and $r_{c_{x}+1}$ through $r_{n}$ to the other server $s_{3-x}$, $\sigma$ is called {\em one-sided-priority}.
\end{definition}

For an input $\sigma$, we define $Rate(\sigma) = \frac{GREEDY(\sigma)}{OPT(\sigma)}$.
By the following two lemmas, we show that it suffices to consider inputs that are anti-opt and one-sided-priority.
We then show that $GREEDY$ is 3-competitive for such inputs.

\begin{lemma}\label{onlyAntiOpt}
For any input $\sigma$, there exists an anti-opt input $\sigma'$ such that $Rate(\sigma') \ge Rate(\sigma)$.
\end{lemma}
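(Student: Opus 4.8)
The plan is to prove Lemma~\ref{onlyAntiOpt} by a transformation argument: starting from an arbitrary input $\sigma$, I will repeatedly modify the requests that are \emph{not} of anti-opt type (i.e.\ requests $r$ with type $\langle s_x, s_x\rangle$, matched to the same server by both $GREEDY$ and $OPT$) until all such requests are eliminated, while ensuring that $Rate$ does not decrease at any step. The natural quantity to control is the number of ``agreeing'' requests; I would argue that each transformation strictly decreases this count, so the process terminates in an anti-opt input $\sigma'$.

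**First I would** classify the requests of $\sigma$ by their type $\langle s_x, s_y\rangle$. Requests of type $\langle s_1, s_2\rangle$ and $\langle s_2, s_1\rangle$ are already ``anti-opt'' in the per-request sense, so the obstructions are requests of type $\langle s_1, s_1\rangle$ or $\langle s_2, s_2\rangle$. For such an agreeing request $r$, both $GREEDY$ and $OPT$ pay $d(r,s_x)$, contributing equally to numerator and denominator of $Rate$. The key observation I would exploit is that moving such a request's position (or, more cleanly, relocating it to coincide exactly with its server $s_x$, making its cost $0$ for both algorithms) removes a common additive term from both $GREEDY(\sigma)$ and $OPT(\sigma)$. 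Since $GREEDY(\sigma)\ge OPT(\sigma)$ (as $OPT$ is optimal), subtracting the same nonnegative quantity $\delta$ from both numerator and denominator can only increase the ratio: $\frac{A-\delta}{B-\delta}\ge\frac{A}{B}$ whenever $A\ge B>0$ and $0\le\delta\le B$. I must be careful that relocating a request does not change the matching decisions of $GREEDY$ or $OPT$ on the \emph{other} requests; the cleanest route is to show that an agreeing request can be placed at the position of $s_x$ without altering either algorithm's behavior on the remaining requests, or to argue more carefully about how capacities are consumed.

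**The main obstacle** I anticipate is precisely this independence issue: $GREEDY$'s decisions are history-dependent, since it matches to the closest \emph{free} server, and capacity consumption couples the requests together. Deleting or relocating one agreeing request could change whether a server is free when a later request arrives, thereby altering the types of subsequent requests and the overall cost structure. To handle this I would likely not delete requests outright but instead move each agreeing request to sit exactly on its common server $s_x$; since both algorithms already match it to $s_x$, and it now incurs zero cost, I would verify that the free/full status of servers over time is unchanged (the same server is occupied at the same step), so every other request keeps its original type and matching. This preserves the anti-opt status of the already-anti-opt requests and reduces the count of agreeing requests by one, with $Rate$ nondecreasing by the fraction inequality above.

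**Finally**, iterating this relocation over all agreeing requests yields an input $\sigma'$ in which every request is of type $\langle s_1,s_2\rangle$ or $\langle s_2,s_1\rangle$, i.e.\ $\sigma'$ is anti-opt, and the chain of inequalities gives $Rate(\sigma')\ge Rate(\sigma)$, as required. If the simultaneous relocation of all agreeing requests turns out to interfere with the behavior-preservation argument, I would instead do it one request at a time, re-establishing after each step that the remaining input is well-defined and that the ratio has not decreased, using induction on the number of agreeing requests to complete the proof.
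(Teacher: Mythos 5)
Your central inequality --- that subtracting the common contribution $\delta = d(r,s_x)$ of an ``agreeing'' request from both $GREEDY(\sigma)$ and $OPT(\sigma)$ can only increase the ratio, since $\frac{A-\delta}{B-\delta}\ge\frac{A}{B}$ when $A\ge B>\delta$ --- is exactly the engine of the paper's proof, and you correctly identify the real difficulty (preserving both algorithms' behaviour on the remaining requests). The gap lies in the mechanism you commit to. Relocating an agreeing request to $p(s_x)$ does preserve $GREEDY$'s execution and drives that request's cost to zero, but the resulting input is \emph{not} anti-opt in the sense of Definition~\ref{antiOpt}: the relocated request is still matched to the same server $s_x$ by both $GREEDY$ and $OPT$. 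So your concluding claim that every request of $\sigma'$ is of type $\langle s_1,s_2\rangle$ or $\langle s_2,s_1\rangle$ is false for the input you construct, and the lemma as stated is not established. This is not merely cosmetic: Lemma~\ref{onlySidePriority} and Theorem~\ref{upperBound} use the anti-opt structure substantively (the two types are equinumerous and each capacity equals $n/2$), which fails if zero-cost agreeing requests remain. Your termination measure also breaks, since relocation does not decrease the number of agreeing requests.

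The fix you were missing is the paper's: \emph{delete} the agreeing request $r$ \emph{and simultaneously decrement the capacity $c_x$ of its server by one}. The capacity decrement is precisely what neutralizes the coupling you were worried about: on the reduced instance $s_x$ has one fewer slot but, from the point where $r$ would have arrived onward, also one fewer assigned request, so every surviving request sees a server state that forces the same decision as before (a request that $GREEDY$ sent to $s_{3-x}$ by choice in $\sigma$ may be sent there by necessity in $\sigma''$, but the assignment is unchanged); likewise the old optimal matching restricted to the surviving requests remains optimal for the reduced capacities. With this deletion both costs drop by exactly $d(r,s_x)$, your ratio inequality applies verbatim, the number of requests strictly decreases so the iteration terminates, and the terminal input is genuinely anti-opt.
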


\begin{proof}
If $\sigma$ is already anti-opt, we can set $\sigma' = \sigma$. Hence, in the following, we assume that $\sigma$ is not anti-opt.
Then there exists a request $r$ in $\sigma$ that is matched with the same server $s_x$ by $OPT$ and $GREEDY$. Let $\sigma''$ be an input obtained from $\sigma$ by removing $r$ and subtracting the capacity of $s_x$ by 1. By this modification, neither $OPT$ nor $GREEDY$ changes a matching for the remaining requests.
Therefore, 
\begin{eqnarray*}
Rate(\sigma'') & = & \frac{GREEDY(\sigma) -d(r, s_{x})}{OPT(\sigma) - d(r, s_{x})} \\
 &\ge& \frac{GREEDY(\sigma)}{OPT(\sigma)}  \\
& = & Rate(\sigma).
\end{eqnarray*}

Let $\sigma'$ be the input obtained by repeating this operation until the input sequence becomes anti-opt. Then $\sigma'$ satisfies the conditions of this lemma.
\qed
\end{proof}

\begin{lemma}\label{onlySidePriority}
For any anti-opt input $\sigma$, there exists an anti-opt and one-sided-priority input $\sigma'$ such that $Rate(\sigma') \ge Rate(\sigma)$.
\end{lemma}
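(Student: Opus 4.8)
The plan is to obtain $\sigma'$ simply by \emph{reordering} the requests of $\sigma$, keeping every request (and hence every distance) unchanged. The point of reordering only is that the offline optimum depends on the multiset of requests and on the server configuration, but not on the arrival order, so any optimal assignment for $\sigma$ is optimal for $\sigma'$; this gives $OPT(\sigma') = OPT(\sigma)$ and lets me reuse the same optimal matching. Consequently, if I can arrange the order so that $GREEDY$ produces exactly the same request-to-server assignment on $\sigma'$ as on $\sigma$, then $GREEDY(\sigma') = GREEDY(\sigma)$, the input stays anti-opt (each request is still matched to opposite servers by the two algorithms), and $Rate(\sigma') = Rate(\sigma)$, which suffices.

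First I would record the structure imposed by anti-optimality. Because $\sigma$ is anti-opt and there are only two servers, every request has type either $\langle s_1, s_2\rangle$ or $\langle s_2, s_1\rangle$; let $A$ and $B$ be the corresponding groups, so that $GREEDY$ matches exactly the requests of $A$ to $s_1$ (hence $|A| = c_1$) and those of $B$ to $s_2$ (hence $|B| = c_2$). I call a request \emph{natural} if $GREEDY$ matched it to the server it is weakly closer to, and \emph{forced} otherwise, i.e. when $GREEDY$ was driven to the farther server only because the preferred server was already full; since ties are broken toward $s_1$, an equidistant request counts as preferring $s_1$.

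The crux of the argument, and the step I expect to be the main obstacle, is showing that this reordering really does preserve the $GREEDY$ assignment, which I would reduce to the claim that forced matches can occur on at most one of the two servers. The reason this should hold is that a request forced onto $s_1$ can only arise after $s_2$ has been filled, i.e. after all of $B$ has arrived, and symmetrically a request forced onto $s_2$ can only arise after all of $A$ has arrived; a request of each kind would have to precede the other, a contradiction. Granting this, (at least) one of $A, B$ consists entirely of natural requests, and I would present that all-natural group first: each of its members finds its preferred server free and is matched there, filling that server to capacity, after which the remaining requests find only the other server free and are all matched to it. Hence $GREEDY$ reproduces its original assignment, the resulting order is one-sided-priority by definition, and the desired conclusions ($GREEDY$ cost unchanged, anti-opt preserved, $Rate(\sigma') \ge Rate(\sigma)$) follow. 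The care points I would still need to verify are the tie-breaking bookkeeping in the natural/forced dichotomy and the fact that the group presented first has size exactly equal to the capacity of the server it fills, so that the second group is genuinely forced onto the remaining server.
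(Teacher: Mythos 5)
Your proposal is correct and follows essentially the same route as the paper: both keep the multiset of requests fixed (so the optimal cost is unchanged), reorder so that all requests $GREEDY$ sends to the first-filled server come first, and verify that $GREEDY$'s request-to-server assignment is preserved. The only cosmetic difference is that you perform the reordering in one shot, justified by the ``forced matches occur on at most one server'' observation, whereas the paper moves the misplaced $\langle s_2,s_1\rangle$ requests one at a time past the request that fills the first server --- the underlying insight (both servers are still free whenever a request could be natural on the first-filled side) is the same.
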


\begin{proof}

If $\sigma$ is already one-sided-priority, we can set $\sigma' = \sigma$. Hence, in the following, we assume that $\sigma$ is not one-sided-priority.

Since $\sigma$ is anti-opt, $\sigma$ contains only requests of type $\langle s_1, s_2 \rangle$ or $\langle s_2, s_1 \rangle$. Without loss of generality, assume that in execution of $GREEDY$, the server $s_{1}$ becomes full before $s_{2}$, and let $r_{t}$ be the request that makes $s_1$ full (i.e., $r_{t}$ is the last request of type $\langle s_1, s_2 \rangle$).

Because $\sigma$ is not one-sided-priority, $\sigma$ includes at least one request $r_{i}$ of type $\langle s_2, s_1 \rangle$ before $r_{t}$. Let $\sigma''$ be the input obtained from $\sigma$ by moving $r_{i}$ to just after $r_{t}$. 
Since the set of requests is unchanged in $\sigma$ and $\sigma''$, an optimal matching for $\sigma$ is also optimal for $\sigma''$, so $OPT(\sigma'')=OPT(\sigma)$.
In the following, we show that $GREEDY$ matches each request to the same server in $\sigma$ and $\sigma''$. 
The sequence of requests up to $r_{i-1}$ are the same in $\sigma''$ and $\sigma$, so the claim clearly holds for $r_{1}$ through $r_{i-1}$.
The behavior of $GREEDY$ for $r_{i+1}$ through $r_{t}$ in $\sigma''$ is also the same for those in $\sigma$, because when serving these requests, both $s_{1}$ and $s_{2}$ are free in both $\sigma$ and $\sigma''$.
Just after serving $r_{t}$ in $\sigma''$, $s_1$ becomes full, so $GREEDY$ matches $r_{i}, r_{t+1}, \ldots, r_{n}$ with $s_{2}$ in $\sigma''$.
Note that these requests are also matched with $s_{2}$ in $\sigma$.
Hence $GREEDY(\sigma'')=GREEDY(\sigma)$ and it results that $Rate(\sigma'') = Rate(\sigma)$.
Note that $\sigma''$ remains anti-opt.

Let $\sigma'$ be the input obtained by repeating this operation until the input sequence becomes one-sided-priority. Then $\sigma'$ satisfies the condition of the lemma.
\qed
\end{proof}

We can now prove the upper bound.

\begin{theorem}\label{upperBound}
The competitive ratio of $GREEDY$ is at most 3 for OMM($2$).
\end{theorem}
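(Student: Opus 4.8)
The plan is to combine Lemmas~\ref{onlyAntiOpt} and~\ref{onlySidePriority} to reduce the problem to bounding $Rate(\sigma)$ for inputs $\sigma$ that are simultaneously anti-opt and one-sided-priority, and then to establish $GREEDY(\sigma) \le 3\,OPT(\sigma)$ directly for such $\sigma$ via two triangle-inequality estimates. First I would record the structural consequences of the two properties. Write $D := d(s_1,s_2)$ and, for a request $r$, abbreviate $d_1(r) := d(r,s_1)$ and $d_2(r) := d(r,s_2)$. Because $\sigma$ is anti-opt, every request is of type $\langle s_1,s_2\rangle$ or $\langle s_2,s_1\rangle$; because $\sigma$ is one-sided-priority, $GREEDY$ fills one server, say $s_1$ (the other case is symmetric), with the first block of requests and the other server with the rest. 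A short counting argument matching these two descriptions forces $c_1=c_2=:m$, $n=2m$, the requests $r_1,\dots,r_m$ to be exactly those of type $\langle s_1,s_2\rangle$, and $r_{m+1},\dots,r_{2m}$ to be exactly those of type $\langle s_2,s_1\rangle$.

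Next I would exploit what each type tells us about the geometry. For $i \le m$, both servers are still free when $r_i$ is served, so $GREEDY$ chose the closer one; with the tie rule this gives $d_1(r_i) \le d_2(r_i)$. For $i > m$, $GREEDY$ was forced onto $s_2$ only because $s_1$ was already full, so no such inequality is available. The two costs are
\begin{align*}
GREEDY(\sigma) &= \sum_{i=1}^{m} d_1(r_i) + \sum_{i=m+1}^{2m} d_2(r_i), \\
OPT(\sigma) &= \sum_{i=1}^{m} d_2(r_i) + \sum_{i=m+1}^{2m} d_1(r_i).
\end{align*}
Applying $d_1(r_i) \le d_2(r_i)$ termwise on the first block and the triangle inequality $d_2(r_i) \le d_1(r_i)+D$ termwise on the second block, I would obtain $GREEDY(\sigma) \le OPT(\sigma) + mD$.

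It then remains to lower-bound $OPT(\sigma)$ enough to absorb the slack $mD$. The key observation is that for each $i \le m$ the triangle inequality through $r_i$ gives $d_1(r_i)+d_2(r_i) \ge D$, which combined with $d_1(r_i) \le d_2(r_i)$ yields $d_2(r_i) \ge D/2$. Summing over the first block gives $OPT(\sigma) \ge \sum_{i=1}^m d_2(r_i) \ge mD/2$, hence $mD \le 2\,OPT(\sigma)$ and therefore $GREEDY(\sigma) \le OPT(\sigma) + 2\,OPT(\sigma) = 3\,OPT(\sigma)$. I expect the main conceptual obstacle to be exactly the second block of type $\langle s_2,s_1\rangle$ requests: since $GREEDY$ serves them only because $s_1$ is full, they can lie arbitrarily far on the $s_1$-side and there is no direct comparison of $GREEDY$'s and $OPT$'s cost on them. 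The resolution---routing $GREEDY$'s cost through $D$ and then recovering a matching $mD/2$ lower bound from the \emph{cheap} side of $OPT$ (the first block matched to the far server $s_2$)---is where the factor $3$ originates, and making these two estimates fit together is the crux of the argument.
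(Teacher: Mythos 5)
Your proof is correct and takes essentially the same route as the paper's: the same reduction via Lemmas~\ref{onlyAntiOpt} and~\ref{onlySidePriority}, the same cost decomposition, the greedy-choice inequality $d(r_i,s_1)\le d(r_i,s_2)$ on the first block, and the triangle inequality $d(r_i,s_2)\le d(r_i,s_1)+d(s_1,s_2)$ on the second block, yielding $GREEDY(\sigma)\le OPT(\sigma)+\frac{n}{2}d(s_1,s_2)$. The only divergence is the final absorption step: the paper sums $d(s_1,s_2)\le d(r_i,s_1)+d(r_i,s_2)$ over all $n$ requests to get $\frac{n}{2}d(s_1,s_2)\le\frac{1}{2}(OPT(\sigma)+GREEDY(\sigma))$ and then rearranges, whereas you note $d(r_i,s_2)\ge d(s_1,s_2)/2$ on the first block alone to conclude $\frac{n}{2}d(s_1,s_2)\le 2\,OPT(\sigma)$ directly --- a marginally cleaner finish that avoids the self-referential inequality, but not a different argument in substance.
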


\begin{proof}
By Lemma \ref{onlyAntiOpt}, it suffices to analyze only anti-opt inputs. 
In an anti-opt input, the number of requests of type $\langle s_1, s_2 \rangle$ and that of type $\langle s_2, s_1 \rangle$ are the same and the capacities of $s_{1}$ and $s_{2}$ are $n/2$ each.
By Lemma \ref{onlySidePriority}, it suffices to analyze only the inputs where the first $n/2$ requests are of type $\langle s_1, s_2 \rangle$ and the remaining $n/2$ requests are of type $\langle s_2, s_1 \rangle$.

Let $\sigma$ be an arbitrary such input.
Then we have that
\[GREEDY(\sigma) = \sum_{i=1}^{n/2} d(r_{i}, s_{1}) + \sum_{i=n/2+1}^n d(r_{i}, s_{2}) \] and
\[OPT(\sigma) = \sum_{i=1}^{n/2} d(r_{i}, s_{2}) + \sum_{i=n/2+1}^n d(r_{i}, s_{1}). \]

When serving $r_{1}, r_{2}, \ldots, r_{n/2}$, both servers are free but GREEDY matched them with $s_{1}$.
Hence $d(r_{i}, s_{1}) \leq d(r_{i}, s_{2})$ for $1 \leq i \leq n/2$.
By the triangle inequality, we have $d(r_{i}, s_{2}) \leq  d(s_{1}, s_{2}) + d(r_{i}, s_{1})$ for  $n/2+1 \leq i \leq n$.
Again, by the triangle inequality, we have $d(s_{1}, s_{2}) \leq d(r_{i}, s_{1}) + d(r_{i}, s_{2})$ for $1 \leq i \leq n$.

From these inequalities, we have that
\begin{eqnarray*}
GREEDY(\sigma) & = & \sum_{i=1}^{n/2} d(r_{i}, s_{1}) + \sum_{i=n/2+1}^n d(r_{i}, s_{2}) \\
  & \leq & \sum_{i=1}^{n/2} d(r_{i}, s_{2}) + \sum_{i=n/2+1}^n (d(s_{1}, s_{2}) + d(r_{i}, s_{1}))  \\
   & = & OPT(\sigma) + \frac{n}{2} d(s_{1}, s_{2})  \\
   & = & OPT(\sigma) + \frac{1}{2}\sum_{i=1}^{n} d(s_{1}, s_{2}) \\
   & \leq & OPT(\sigma) + \frac{1}{2}\sum_{i=1}^{n} (d(r_{i}, s_{1}) + d(r_{i}, s_{2}))  \\
   & = & OPT(\sigma) + \frac{1}{2}(OPT(\sigma)+GREEDY(\sigma))  \\
   & = & \frac{3}{2}OPT(\sigma)+\frac{1}{2}GREEDY(\sigma).  \\
\end{eqnarray*}
Thus $GREEDY(\sigma) \leq 3OPT(\sigma)$ and the competitive ratio of $GREEDY$ is at most 3.
\qed
\end{proof}

\subsection{Lower bound}

\begin{theorem}\label{lowerBound}
The competitive ratio of any deterministic online algorithm for OMM($2$) is at least 3.
\end{theorem}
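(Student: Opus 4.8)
The plan is to show that for every deterministic online algorithm $ALG$ there is an input $\sigma$ with $ALG(\sigma) \ge 3\,OPT(\sigma)$; since the adversary may react to the (deterministic, hence predictable) choices of $ALG$, it suffices to design an adaptive instance with only two requests. I would take the metric space to be $\mathbb{R}$, place the two servers at $p(s_1)=0$ and $p(s_2)=1$, and give each server capacity $1$, so that $n=2$ and $c_1=c_2=1$.

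First I would present $r_1$ at the midpoint $p(r_1)=\tfrac12$, which is equidistant from both servers at cost $\tfrac12$. Whatever $ALG$ does, it matches $r_1$ to one server and thereby fills it (capacity $1$). By symmetry assume $ALG$ matches $r_1$ to $s_1$ (the case of $s_2$ is handled by mirroring the whole construction about $\tfrac12$). I would then present $r_2$ exactly at the now-full server, $p(r_2)=0$: since $s_1$ is full, $ALG$ is forced to match $r_2$ to $s_2$ at cost $1$, giving $ALG(\sigma)=\tfrac12+1=\tfrac32$.

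It remains to bound $OPT$. The offline optimum matches $r_1$ to $s_2$ (cost $\tfrac12$) and $r_2$ to $s_1$ (cost $0$), so $OPT(\sigma)\le\tfrac12$; checking the only other matching ($r_1\to s_1$, $r_2\to s_2$, cost $\tfrac32$) confirms $OPT(\sigma)=\tfrac12$. Hence $ALG(\sigma)/OPT(\sigma)=3$, and taking the supremum over inputs gives a competitive ratio of at least $3$. The construction involves essentially no computation; the only points that need care are verifying that the adversary's second request is legitimately determined by $ALG$'s deterministic first move, that $OPT$ indeed attains the cheap matching (so the denominator equals the claimed positive value), and that the $s_2$-branch is genuinely symmetric. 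Thus there is no substantive obstacle, and the ``hard part'' is merely recognizing that a minimal adaptive instance already forces the bound.
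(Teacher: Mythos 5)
Your proof is correct and uses essentially the same adversary as the paper: a request at the midpoint of the two servers followed by a request at whichever server the algorithm just filled, forcing cost $3d$ against $OPT$'s $d$. The only difference is that the paper additionally handles arbitrary capacities $c_1,c_2$ by first issuing $c_i-1$ requests at each server position (punishing any deviation with an infinite ratio), whereas you simply choose the instance with $c_1=c_2=1$, which is equally sufficient for the lower bound as stated.
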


\begin{proof}
We prove this lower bound on a 1-dimensional real line metric.
Let $p(s_{1})=-d$ and $p(s_{2})=d$ for a constant $d$.
Consider any deterministic algorithm $ALG$.
First, our adversary gives $c_1-1$ requests at $p(s_1)$ and $c_2-1$ requests at $p(s_2)$.
$OPT$ matches the first $c_1-1$ requests with $s_1$ and the rest with $s_2$.
If there exists a request that $ALG$ matches differently from $OPT$, the adversary gives two more requests, one at $p(s_1)$ and the other at $p(s_2)$.
Then, the cost of $OPT$ is zero, while the cost of $ALG$ is positive, so the ratio of them becomes infinity.

Next, suppose that $ALG$ matches all these requests with the same server as $OPT$.
Then the adversary gives the next request at the origin $0$.
Let $s_{x}$ be the server that $ALG$ matches this request with.
Then $OPT$ matches this request with the other server $s_{3-x}$.
After that, the adversary gives the last request at $p(s_{x})$.
$ALG$ has to match it with $s_{3-x}$ and $OPT$ matches it with $s_{x}$.
The costs of $ALG$ and $OPT$ for this input is $3d$ and $d$, respectively.
This completes the proof.
\qed
\end{proof}

\section{Online Facility Assignment Problem on Line}

In this section, we show lower bounds on the competitive ratio of OFAL($k$) for $k=3, 4$, and 5.
To simplify the proofs, we recall useful properties that allow us to restrict online algorithms to consider \cite{koutsoupias, antoniadis18}.
When a request $r$ is given, the {\em surrounding servers} for $r$ are the closest free server to the left of $r$ and the closest free server to the right of $r$.
If, for any input, an algorithm $ALG$ matches every request with one of the surrounding servers, $ALG$ is called {\em surrounding-oriented}.

\begin{proposition}\label{surrounding}
For any algorithm $ALG$, there exists a surrounding-oriented algorithm $ALG'$ such that $ALG'(\sigma) \le ALG(\sigma)$ for any input $\sigma$.
\end{proposition}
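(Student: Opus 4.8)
The plan is to use a local exchange (swap) argument: I will show that any single decision in which $ALG$ matches a request with a non-surrounding server can be repaired without increasing the total cost, and then organize all such repairs into a surrounding-oriented algorithm $ALG'$ dominating $ALG$. Fix an input and suppose $ALG$ matches some request $r$ with a free server $s$ that is not surrounding for $r$. Without loss of generality $p(s) > p(r)$; let $s'$ be the closest free server to the right of $r$ at the moment $r$ is served. Since $s$ is not surrounding we have $p(r) \le p(s') < p(s)$ with $s'$ free, and I define the repaired run to match $r$ with $s'$ instead of $s$.

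After this change the two runs differ only in that, in the repaired run, $s'$ has one fewer and $s$ one more free slot. To re-couple the runs I let the repaired run copy every subsequent decision of $ALG$, with a single exception: the first later request $r''$ that $ALG$ matches with $s'$ at a moment when $s'$ is already full in the repaired run is instead matched with $s$. A capacity count shows this exception is triggered at most once (exactly when $ALG$ makes its $\ell$-th assignment to $s'$), after which the two runs have identical remaining capacities and the repaired run copies $ALG$ verbatim; if $ALG$ never fills $s'$, no exception is needed and only $r$'s assignment changes. At $r''$ the server $s$ is guaranteed free in the repaired run, since there its usage of $s$ has stayed one below that of $ALG$.

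For the cost, write $D = p(s) - p(s') > 0$. If no exception occurs, only $r$'s edge changes and $d(r,s') - d(r,s) = -D \le 0$. If the exception occurs at $r''$, the runs differ only on the edges of $r$ and $r''$, and the total change is
\[ \bigl(d(r,s') - d(r,s)\bigr) + \bigl(d(r'',s) - d(r'',s')\bigr) = -D + \bigl(d(r'',s) - d(r'',s')\bigr) \le -D + d(s,s') = 0, \]
where the inequality is the reverse triangle inequality $d(r'',s) - d(r'',s') \le d(s,s') = D$. Hence the repair never increases the cost while removing one non-surrounding edge. This reverse-triangle cancellation is the concrete heart of the argument.

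It remains to turn these local repairs into a \emph{single} surrounding-oriented algorithm $ALG'$ valid on \emph{every} input, and this is the step I expect to be the main obstacle, for two reasons: termination and online-ness. For termination on a fixed input one can track a potential such as the total number of free servers lying strictly between a request and its match, summed over requests, which is zero exactly for surrounding-oriented runs; care is needed because rerouting $r''$ to $s$ may itself create a new non-surrounding edge, so the potential (or the order of repairs, e.g.\ always repairing the last non-surrounding decision) must be chosen so that it provably decreases. To obtain one online $ALG'$ dominating $ALG$ on all inputs simultaneously, I would instead perform the repairs inside an online simulation: $ALG'$ runs a private copy of $ALG$ and maintains a relabeling (bijection) between its own free servers and those of the simulated $ALG$, always serving the surrounding server corresponding under this relabeling and updating the relabeling by the swap above. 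Since every exception is triggered only by $ALG'$'s own fully known capacities, $ALG'$ is a legitimate online algorithm, and applying the per-swap inequality along the simulation yields $ALG'(\sigma) \le ALG(\sigma)$ for every $\sigma$.
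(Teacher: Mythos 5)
The paper itself offers no proof of this proposition: it is recalled from prior work (\cite{koutsoupias, antoniadis18}) and used as a black box, so there is nothing to compare against line by line. Judged on its own terms, your local exchange is the standard argument and its core computation is correct: on a line the rerouted request satisfies $d(r,s') = d(r,s) - D$ exactly (collinearity), the rerouting of $r''$ costs at most $+D$ by the triangle inequality, the capacity bookkeeping (the exception fires at most once, and $s$ is guaranteed free at $r''$ because the repaired run's usage of $s$ is one below $ALG$'s) is right, so a single repair never increases the cost.

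The gap is exactly where you predicted it: composing the repairs. The per-swap inequality does not immediately iterate, because rerouting $r''$ to $s$ can create a fresh non-surrounding edge \emph{later} in the sequence, so neither "repair the first violation" nor "repair the last violation" gives an obviously decreasing measure, and your proposed potential (number of free servers strictly between each request and its match) is not clearly monotone under a repair since the repair changes which servers are free at all later times. The clean way to close this--and to get online-ness for free--is to run your simulation idea with a sharper invariant: let $\Phi$ be the minimum cost of a perfect matching between the multiset of free slots of $ALG'$ and that of the simulated $ALG$ (with slot distance equal to the distance between their servers). When $ALG$ serves $r$ at $s$, let $u$ be the $ALG'$-slot matched to the consumed slot of $s$ under the optimal matching realizing $\Phi$, and let $ALG'$ serve $r$ at its own surrounding free server between $r$ and $u$'s server; one checks $\Delta ALG' + \Delta\Phi \le \Delta ALG$ for each request, using precisely your reverse-triangle cancellation. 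Since $\Phi=0$ initially and $\Phi\ge 0$ always, summing gives $ALG'(\sigma)\le ALG(\sigma)$ with no termination argument needed. So the idea is the right one, but the proof as written stops one genuine step short of a complete argument.
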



%
%
%
%

By Proposition \ref{surrounding}, it suffices to consider only surrounding-oriented algorithms for lower bound arguments.

\begin{theorem}\label{lb3}
The competitive ratio of any deterministic online algorithm for OFAL($3$) is at least $1+ \sqrt{6}$ $(> 3.44948)$.
\end{theorem}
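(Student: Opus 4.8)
The plan is to invoke Proposition~\ref{surrounding} to restrict attention to surrounding-oriented algorithms, and then to exhibit a single \emph{adaptive} adversary that forces every such algorithm to incur cost at least $(1+\sqrt6)\cdot OPT$. I would fix the three servers at positions $0,1,2$ with common capacity $\ell$ and treat $\ell$ as large, so that batches of requests behave like continuous quantities. A first sanity check shows why both adaptivity and large $\ell$ are needed: any purely positional gadget that leaves one free slot per server and plays a bounded number of midpoint requests can be dodged by the algorithm that always commits to the side keeping the leftmost server free, which caps the forced ratio at $3$. To exceed $3$ the adversary must make the algorithm devote a nontrivial \emph{fraction} of a server's capacity to the wrong side and then punish that commitment with a later batch, so the construction is inherently batched and reactive.

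Concretely, I would have the adversary release a batch of requests at a carefully chosen intermediate position $x$, where the only two surrounding free servers lie a unit apart, and record how the surrounding-oriented algorithm distributes the batch between the near server (per-request cost $x$) and the far server (per-request cost $1-x$). Whatever the split, the server that the algorithm has over-filled is then attacked: the adversary reveals a follow-up batch placed exactly at a server position that is already full for the algorithm, forcing those requests across one or even two unit gaps (cost up to $2$), whereas the offline optimum, recomputed on the final request multiset, can reserve that very slot and serve the follow-up batch at cost $0$. Branching on the algorithm's left/right decisions, and on which side it over-fills, produces a small number of cases, each ending in an explicit final sequence on which both $ALG$ and $OPT$ are easy to evaluate.

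As $\ell \to \infty$ the batch sizes become continuous fractions, and the ratio in each branch is a quotient of piecewise-linear functions of $x$ and of those fractions. I would then optimize: the adversary picks $x$ and the batch proportions to maximize the \emph{minimum} ratio over all algorithmic responses, and the worst case should occur where the commit-near and commit-far branches are balanced. Carrying out this balance is what produces the quadratic $t^2 - 2t - 5 = 0$ for the target ratio $t$, whose positive root is exactly $1+\sqrt6$; this quadratic is the algebraic fingerprint of the bound, and the analogous single-parameter optimizations for $k=4,5$ should give $3t^2 - 8t - 19 = 0$ and the rational value $\tfrac{13}{3}$.

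The hard part will be the bookkeeping rather than any single inequality. I must verify that \emph{every} surrounding-oriented response is covered, since the algorithm may split a batch across its two free servers in any proportion and may break ties either way at each step, and for each resulting algorithm-dependent final sequence I must compute $OPT$ correctly; this is where errors hide, because $OPT$ is free to rearrange all batches and its cheapest assignment determines the denominator. I also need to check throughout that no capacity is exceeded and that the total number of requests stays at most $3\ell$, and finally that the joint optimization over $x$ and the proportions genuinely attains $1+\sqrt6$ and is not improved by some other algorithmic response. The geometric intuition, trading a cheap commitment near a midpoint against an expensive length-$2$ traversal later, is clear; making the optimization tight and the case analysis exhaustive is the real work.
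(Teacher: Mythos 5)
Your plan rests on a false premise that sends you down a much harder road than necessary. You assert that ``any purely positional gadget that leaves one free slot per server and plays a bounded number of midpoint requests can be dodged\ldots which caps the forced ratio at $3$,'' and you conclude that the construction must be batched, reactive, and analyzed in the limit $\ell\to\infty$. The cap at $3$ is an artifact of restricting the adversary to \emph{midpoints}: the actual proof is exactly a bounded one-free-slot gadget, but with the two critical requests placed \emph{off} the midpoints. After the standard prefix of $\ell-1$ requests at each server (which pins both $ALG$ and $OPT$ to one free slot per server, on pain of an unbounded ratio), the adversary plays $r_1$ at $p(s_2)+x$ with $x=\sqrt{6}-2\approx 0.449$ and, if $ALG$ takes the near server $s_2$, plays $r_2$ at $p(s_2)-y$ with $y=3\sqrt{6}-7\approx 0.348$, finishing in each branch with one request at a server position that is full for $ALG$ but not for $OPT$. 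The three resulting branches give ratios $\frac{2-x}{x}$, $\frac{3+x-y}{1-x+y}$, and $\frac{3+x+y}{1+x-y}$, and the choice of $x$ and $y$ equalizes all three at exactly $1+\sqrt{6}$; this two-parameter balancing of a three-request tree is the idea your proposal is missing. Note also that $OPT$ does \emph{not} serve the punishing request at cost $0$ while paying nothing elsewhere --- it pays $x$ (resp.\ $1-x+y$, $1+x-y$) overall, and it is precisely this small positive denominator that the optimization tunes.

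Beyond the wrong premise, the route you do propose is not carried out to the point where it could be checked. The quadratic $t^2-2t-5=0$ is the correct algebraic fingerprint of $1+\sqrt{6}$, but you obtain it by announcing that ``balancing the branches'' will produce it, not by exhibiting the branches; the claimed punish step (``forcing those requests across one or even two unit gaps, cost up to $2$'') needs two servers simultaneously full for a surrounding-oriented algorithm to pay $2$, and your sketch never verifies when that configuration is reachable, how $OPT$ redistributes the batches, or that the min over all fractional splits of a batch is attained at a pure strategy. Finally, the $\ell\to\infty$ framing is both unnecessary and weaker: the paper's gadget works for every $\ell\ge 1$ and achieves the ratio exactly, whereas a limiting construction would only approach it. As written, the proposal identifies the right target value but does not contain a proof of Theorem~\ref{lb3}.
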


\begin{proof}
Let $ALG$ be any surrounding-oriented algorithm.
Our adversary first gives $\ell - 1$ requests at $p(s_i)$ for each $i=1,2$ and 3.
$OPT$ matches every request $r$ with the server at the same position $p(r)$. 
If $ALG$ matches some request $r$ with a server not at $p(r)$, then the adversary gives three more requests, one at each position of the server. The cost of $ALG$ is positive and the cost of $OPT$ is zero, so the ratio of the costs is infinity.

Next, suppose that $ALG$ matches all these requests to the same server as $OPT$.
Let $x=\sqrt{6}-2$ $(\simeq 0.44949)$ and $y=3\sqrt{6}-7$ $(\simeq 0.34847)$.
The adversary gives a request $r_1$ at $p(s_2)+x$.

\smallskip
\smallskip

\noindent
{\bf\boldmath Case 1. $ALG$ matches $r_1$ with $s_3$.}\\
See Fig.~\ref{fig:3-1}.  The adversary gives the next request $r_2$ at $p(s_3)$.
$ALG$ matches it with $s_2$.
Finally, the adversary gives a request $r_{3}$ at $p(s_1)$ and $ALG$ matches it with $s_1$.
The cost of $ALG$ is $2-x=4-\sqrt{6}$ and the cost of $OPT$ is $x=\sqrt{6}-2$.
The ratio is $\frac{4-\sqrt{6}}{\sqrt{6}-2} = 1+ \sqrt{6}$.

\begin{figure}
\begin{center}
\includegraphics[width=7cm]{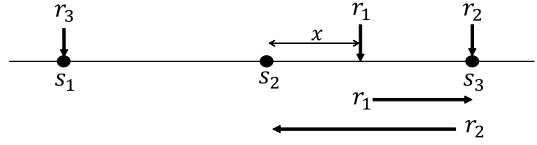}
\end{center}
\caption{Requests and $ALG$'s matching for Case 1 of Theorem \ref{lb3}.} \label{fig:3-1}
\end{figure}

\smallskip
\smallskip

\noindent
{\bf\boldmath Case 2. $ALG$ matches $r_1$ with $s_2$.}\\
The adversary gives the next request $r_2$ at $p(s_2)-y$. 
We have two subcases.

\smallskip
\smallskip

\noindent
{\bf\boldmath  Case 2-1. $ALG$ matches $r_2$ with $s_1$.}\\
See Fig.~\ref{fig:3-2-1}. The adversary gives a request $r_3$ at $p(s_1)$ and $ALG$ matches it with $s_{3}$.
The cost of $ALG$ is $3+x-y=8-2\sqrt{6}$ and the cost of $OPT$ is $1-x+y=2\sqrt{6}-4$.
The ratio is $\frac{8-2\sqrt{6}}{2\sqrt{6}-4} = 1+ \sqrt{6}$.

\begin{figure}
\begin{center}
\includegraphics[width=7cm]{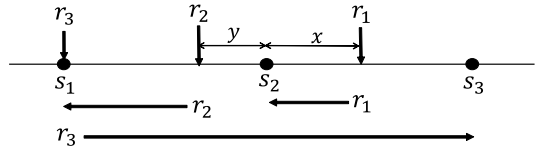}
\end{center}
\caption{Requests and $ALG$'s matching for Case 2-1 of Theorem \ref{lb3}.} \label{fig:3-2-1}
\end{figure}

\smallskip
\smallskip

\noindent
{\bf\boldmath  Case 2-2. $ALG$ matches $r_2$ with $s_3$.}\\
See Fig.~\ref{fig:3-2-2}. The adversary gives a request $r_3$ at $p(s_3)$ and $ALG$ matches it with $s_{1}$.
The cost of $ALG$ is $3+x+y=4\sqrt{6}-6$ and the cost of $OPT$ is $1+x-y=6-2\sqrt{6}$.
The ratio is $\frac{4\sqrt{6}-6}{6-2\sqrt{6}} = 1+ \sqrt{6}$.

\begin{figure}
\begin{center}
\includegraphics[width=7cm]{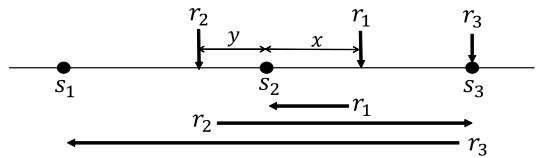}
\end{center}
\caption{Requests and $ALG$'s matching for Case 2-2 of Theorem \ref{lb3}.} \label{fig:3-2-2}
\end{figure}

In any case, the ratio of $ALG$'s cost to $OPT$'s cost is $1+ \sqrt{6}$.
This completes the proof.
\qed
\end{proof}

\begin{theorem}\label{lb4}
The competitive ratio of any deterministic online algorithm for OFAL($4$) is at least $\frac{4+\sqrt{73}}{3}$ $(>4.18133)$.
\end{theorem}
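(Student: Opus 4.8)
The plan is to mirror the adversary argument of Theorem~\ref{lb3}, but with a deeper decision tree reflecting the four free server slots. By Proposition~\ref{surrounding} I may assume $ALG$ is surrounding-oriented, and I place the servers at $p(s_i)=i$ for $i=1,2,3,4$. As in the $k=3$ case, the adversary first issues $\ell-1$ requests at each server position. The optimal offline matching assigns each of these to the server at its own position at zero cost, so if $ALG$ ever places one of them elsewhere the adversary can top off every server with one request per position, forcing $OPT(\sigma)=0$ while $ALG(\sigma)>0$ and making the ratio unbounded. Hence I may assume $ALG$ agrees with $OPT$ on the preprocessing phase, leaving exactly one free slot at each of $s_1,\dots,s_4$.

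The core of the proof is the adaptive phase, in which the adversary reveals four further requests $r_1,r_2,r_3,r_4$ one at a time, each placed at a position determined by $ALG$'s earlier choices and parametrized by a few offsets (analogues of the $x,y$ used for $k=3$). Each non-terminal request is placed strictly between two adjacent free servers so that, because $ALG$ is surrounding-oriented, it has exactly two admissible responses; the adversary branches on which server $ALG$ selects. Terminal requests are placed so as to leave $ALG$ no choice (typically at or beyond a server position where only one free server surrounds it), so that every root-to-leaf path fixes both $ALG$'s matching and, since the request multiset is then determined, the optimal matching realized by $OPT$.

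At each leaf I would write $ALG(\sigma)$ and $OPT(\sigma)$ as explicit linear expressions in the offset parameters and impose that the ratio $ALG(\sigma)/OPT(\sigma)$ equal the target $\rho=\tfrac{4+\sqrt{73}}{3}$. Solving this system of leaf-equations pins down the parameters and, via their consistency condition, produces the quadratic $3\rho^2-8\rho-19=0$ whose positive root is exactly $\tfrac{4+\sqrt{73}}{3}$; this is the source of the surd, just as $\rho^2-2\rho-5=0$ underlies $1+\sqrt{6}$ for $k=3$.

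I expect the main obstacle to be designing the tree topology and the request positions so that simultaneously: at every internal node the two servers the adversary wishes to distinguish really are the two surrounding free servers for the current request; the forced terminal assignments are feasible, i.e.\ the server $ALG$ is compelled to use is the unique remaining free one; and all leaves yield the same ratio at once. Once this geometric bookkeeping is settled, verifying the costs along each branch is routine; the difficulty lies in the combined combinatorial search over branch shapes together with the simultaneous tuning of the offsets, a search that grows noticeably relative to the $k=3$ case.
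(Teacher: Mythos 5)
Your plan is the same in outline as the paper's proof: restrict to surrounding-oriented algorithms via Proposition~\ref{surrounding}, neutralize the first $\ell-1$ requests per server with the ``top off every position'' threat, then run a short adaptive phase whose branch costs are tuned so that every leaf yields the same ratio $\rho$, with $\rho$ satisfying $3\rho^2-8\rho-19=0$. All of that matches the actual argument, and your identification of the quadratic (and its analogue $\rho^2-2\rho-5=0$ for $k=3$) is correct.

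However, as written this is a proof plan, not a proof: the entire technical content of the theorem is the explicit adversary, and you never exhibit it. You do not specify where $r_1,\dots,r_4$ are placed, what the decision tree looks like, or what the offsets are; you explicitly defer ``designing the tree topology and the request positions'' as the main obstacle, which is precisely the step that constitutes the proof. Deriving $3\rho^2-8\rho-19=0$ from the target value $\frac{4+\sqrt{73}}{3}$ is reverse-engineering, not evidence that a consistent construction exists. For the record, the paper's construction is quite economical: $r_1$ goes at the midpoint $\frac{p(s_2)+p(s_3)}{2}$ (so that ``$ALG$ matches it with $s_2$'' is without loss of generality, halving the tree), $r_2$ at $p(s_1)+x$ with $x=\frac{10-\sqrt{73}}{2}$, and in the surviving branch $r_3$ at $p(s_3)+y$ with $y=\frac{11\sqrt{73}-93}{8}$; the tree has only three leaves (not a full binary tree of depth four), each terminated by a request at a server position that forces $ALG$'s hand, and each leaf is checked to give ratio exactly $\frac{4+\sqrt{73}}{3}$. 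Until you supply such explicit positions and verify each branch, the theorem is not established.
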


\begin{proof}
Let $ALG$ be any surrounding-oriented algorithm.
In the same way as the proof of Theorem \ref{lb3}, the adversary first gives $\ell - 1$ requests at $p(s_i)$ for $i=1, 2, 3$, and 4, and we can assume that $OPT$ and $ALG$ match each of these requests to the server at the same position.
Then, the adversary gives a request $r_1$ at $\frac{p(s_2) + p(s_3)}{2}$.
Without loss of generality, assume that $ALG$ matches it with $s_2$.

Let $x=\frac{10-\sqrt{73}}{2}$ $(\simeq 0.72800)$ and $y=\frac{11\sqrt{73}-93}{8}$ $(\simeq 0.12301)$.
The adversary gives a request $r_2$ at $p(s_1)+x$.
We consider two cases depending on the behavior of $ALG$.

\smallskip
\smallskip

\noindent
{\bf\boldmath Case 1. $ALG$ matches $r_2$ with $s_1$.}\\
See Fig.~\ref{fig:4-1}.  The adversary gives the next request $r_3$ at $p(s_1)$.
$ALG$ has to match it with $s_3$.
Finally, the adversary gives a request $r_{4}$ at $p(s_4)$ and $ALG$ matches it with $s_4$. 
The cost of $ALG$ is $\frac{5}{2}+x=\frac{15-\sqrt{73}}{2}$ and the cost of $OPT$ is $\frac{3}{2}-x=\frac{\sqrt{73}-7}{2}$.
The ratio is $\frac{15-\sqrt{73}}{\sqrt{73}-7} = \frac{4+ \sqrt{73}}{3}$.

\begin{figure}
\begin{center}
\includegraphics[width=7cm]{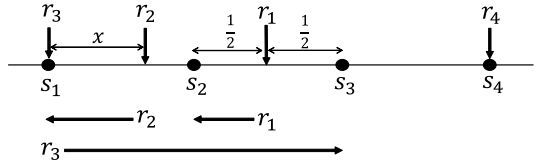}
\end{center}
\caption{Requests and $ALG$'s matching for Case 1 of Theorem \ref{lb4}.} \label{fig:4-1}
\end{figure}

\noindent
{\bf\boldmath  Case 2. $ALG$ matches $r_2$ with $s_3$.}\\
The adversary gives the next request $r_3$ at $p(s_3)+y$. 
We have two subcases.

\smallskip
\smallskip

\noindent
{\bf\boldmath  Case 2-1. $ALG$ matches $r_3$ with $s_4$.}\\
See Fig.~\ref{fig:4-2-1}. The adversary gives a request $r_4$ at $p(s_4)$. $ALG$ has to match it with $s_1$. The cost of $ALG$ is $\frac{13}{2}-x-y=\frac{105-7\sqrt{73}}{8}$ and the cost of $OPT$ is $\frac{1}{2}+x+y=\frac{7\sqrt{73}-49}{8}$.
The ratio is $\frac{105-7\sqrt{73}}{7\sqrt{73}-49} = \frac{4+ \sqrt{73}}{3}$.

\begin{figure}
\begin{center}
\includegraphics[width=7cm]{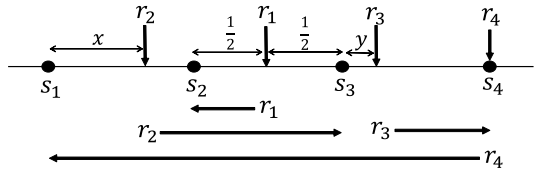}
\end{center}
\caption{Requests and $ALG$'s matching for Case 2-1 of Theorem \ref{lb4}.} \label{fig:4-2-1}
\end{figure}

\smallskip
\smallskip

\noindent
{\bf\boldmath  Case 2-2. $ALG$ matches $r_3$ with $s_1$.}\\
See Fig.~\ref{fig:4-2-2}. The adversary gives a request $r_4$ at $p(s_1)$ and $ALG$ has to match it with $s_4$.
The cost of $ALG$ is $\frac{15}{2}-x+y=\frac{15\sqrt{73}-73}{8}$ and the cost of $OPT$ is $\frac{5}{2}-x-y=\frac{73-7\sqrt{73}}{8}$.
The ratio is $\frac{15\sqrt{73}-73}{73-7\sqrt{73}} = \frac{4+ \sqrt{73}}{3}$.

\begin{figure}
\begin{center}
\includegraphics[width=7cm]{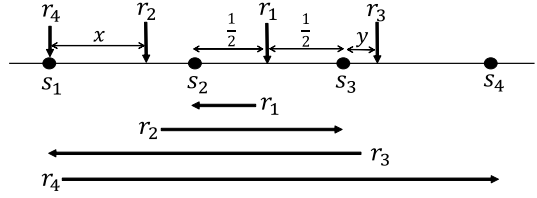}
\end{center}
\caption{Requests and $ALG$'s matching for Case 2-2 of Theorem \ref{lb4}.} \label{fig:4-2-2}
\end{figure}

\smallskip
\smallskip

In any case, the ratio of $ALG$'s cost to $OPT$'s cost is $\frac{4+ \sqrt{73}}{3}$.
This completes the proof.
\qed
\end{proof}


\begin{theorem}\label{lb5}
The competitive ratio of any deterministic online algorithms for OFAL($5$) is at least $\frac{13}{3}$ $(>4.33333)$.
\end{theorem}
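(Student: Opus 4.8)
The plan is to follow the template of the proofs of Theorems~\ref{lb3} and~\ref{lb4}, pushing the adaptive adversary one server deeper. By Proposition~\ref{surrounding} it suffices to fix an arbitrary surrounding-oriented algorithm $ALG$. As before, the adversary first gives $\ell-1$ requests at each of the five positions $p(s_1),\dots,p(s_5)$; unless $ALG$ matches every one of them exactly as $OPT$ does (to the server sharing its point), the adversary finishes with five coincident requests, making $OPT$'s cost $0$ and $ALG$'s cost positive, so the ratio is infinite. Hence I may assume each server retains exactly one free slot and the residual instance behaves like the uncapacitated $5$-server problem.

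The core is an adaptive request sequence $r_1, r_2, \dots$ together with a family of offset parameters (say $x, y, z, \dots$) whose values are pinned down only at the end. I would place $r_1$ near the center and branch on $ALG$'s surrounding-oriented choice; each later request is then planted just inside whichever surrounding server $ALG$ has not yet committed, so that its forced choice again splits into two subcases, building a decision tree. At each leaf the adversary issues a final request exactly at an already-occupied position, which forces $ALG$ to reach across several unit gaps to the unique remaining free server, whereas $OPT$, having kept that server open, pays only a small offset. Reading off the unit gaps and offsets along each branch expresses both $ALG(\sigma)$ and $OPT(\sigma)$ as linear functions of the parameters, and one sets each leaf's ratio equal to the target $\tfrac{13}{3}$.

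The decisive and hardest step is this calibration: for every leaf I would write the equation ``$ALG$-cost $=\tfrac{13}{3}\cdot OPT$-cost'' and solve the resulting linear system for $x, y, z, \dots$, then verify that the solution places each request in a position consistent with the branch that produced it, so that a surrounding-oriented $ALG$ really is forced into the claimed matching, and that all offsets stay within their unit cells so the forcing is legitimate. Two features make this case analysis heavier than the three-leaf trees of Theorems~\ref{lb3} and~\ref{lb4}. First, the extra server roughly doubles the depth, so there are more branches and a larger balancing system to satisfy simultaneously. Second, because five is odd the center is a server rather than a gap, so the clean left--right WLOG reduction used for OFAL($4$) need not apply, and both initial branches may have to be analyzed. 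The creative difficulty is therefore choosing the probing order and locations so that the balancing equations are consistent and yield the clean rational value $\tfrac{13}{3}$ across all leaves; that this value is rational, in contrast to $1+\sqrt{6}$ and $\tfrac{4+\sqrt{73}}{3}$, suggests the system admits a particularly tidy solution. Once such offsets are found, verifying the per-leaf cost computations is routine arithmetic, exactly as in the two preceding theorems.
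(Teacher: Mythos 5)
Your proposal correctly identifies the template (Proposition~\ref{surrounding}, the $\ell-1$ warm-up requests at each server, an adaptive decision tree with calibrated offsets), but it stops at the level of a plan: you never actually specify the request sequence, the offsets, or the per-leaf cost computations. You yourself flag the calibration as ``the decisive and hardest step,'' and that step is exactly what is missing. As written, this is a strategy outline rather than a proof of the $\frac{13}{3}$ bound.

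Moreover, two of your structural guesses about how the five-server case must differ from the smaller cases turn out to be wrong in instructive ways. First, you predict the extra server ``roughly doubles the depth''; in fact the paper's tree has the same three-leaf shape as in Theorems~\ref{lb3} and~\ref{lb4} (Case 1, Case 2-1, Case 2-2), and only a single offset parameter, $\frac{7}{8}$, is needed --- indeed one leaf is not even balanced exactly, giving ratio $\frac{81}{17} > \frac{13}{3}$. Second, you worry that with an odd number of servers the center is a server rather than a gap, so the left--right WLOG reduction ``need not apply'' and both initial branches might require separate analysis. The paper sidesteps this with a trick you did not anticipate: it issues \emph{two} requests at $p(s_3)$. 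The first must be matched to $s_3$ (otherwise the adversary zeroes out $OPT$ and the ratio is infinite), and the second then forces $ALG$ to commit to $s_2$ or $s_4$, restoring the symmetry so that one may assume WLOG it chooses $s_2$. After that, $r_3$ is placed at $p(s_1)+\frac{7}{8}$ and the remaining requests sit exactly at server positions, so the branching and arithmetic proceed just as in the four-server case. Without these concrete choices your argument cannot be completed as stated.
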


\begin{proof}
Let $ALG$ be any surrounding-oriented algorithm.
In the same way as the proof of Theorem \ref{lb3}, the adversary first gives $\ell - 1$ requests at $p(s_i)$ for $i=1, 2, 3, 4$, and 5, and we can assume that $OPT$ and $ALG$ match each of these requests to the server at the same position. 

Then, the adversary gives a request $r_1$ at $p(s_3)$. 
If $ALG$ matches this with $s_2$ or $s_4$, the adversary gives the remaining requests at $p(s_1)$, $p(s_2)$, $p(s_4)$ and $p(s_5)$.
$OPT$'s cost is zero, while $ALG$'s cost is positive, so the ratio is again infinity. 
Therefore, assume that $ALG$ matches $r_1$ with $s_3$. 
The adversary then gives a request $r_2$ at $p(s_3)$. 
Without loss of generality, assume that $ALG$ matches it with $s_2$. 
Next, the adversary gives a request $r_3$ at $p(s_1)+\frac{7}{8}$. 
We consider two cases depending on the behavior of $ALG$.

\smallskip
\smallskip

\noindent
{\bf\boldmath Case 1. $ALG$ matches $r_3$ with $s_1$.}\\
See Fig.~\ref{fig:5-1}.  The adversary gives the next request $r_4$ at $p(s_1)$. $ALG$ has to match it with $s_4$. Finally, the adversary gives a request $r_{5}$ at $p(s_5)$ and $ALG$ matches it with $s_5$. The cost of $ALG$ is $\frac{39}{8}$ and the cost of $OPT$ is $\frac{9}{8}$.
The ratio is $\frac{13}{3}$.

\smallskip
\smallskip

\begin{figure}
\begin{center}
\includegraphics[width=8cm]{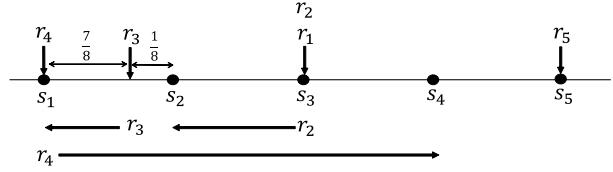}
\end{center}
\caption{Requests and $ALG$'s matching for Case 1 of Theorem \ref{lb5}.} \label{fig:5-1}
\end{figure}

\noindent
{\bf\boldmath Case 2. $ALG$ matches $r_3$ with $s_4$.}\\
The adversary gives the next request $r_4$ at $p(s_4)$.  We have two subcases.

\noindent
{\bf\boldmath Case 2-1. $ALG$ matches $r_4$ with $s_1$.}\\
See Fig.~\ref{fig:5-2-1}. The adversary gives a request $r_5$ at $p(s_1)$ and $ALG$ has to match it with $s_5$. The cost of $ALG$ is $\frac{81}{8}$ and the cost of $OPT$ is $\frac{17}{8}$.
The ratio is $\frac{81}{17}>\frac{13}{3}$.
\smallskip
\smallskip

\begin{figure}
\begin{center}
\includegraphics[width=8cm]{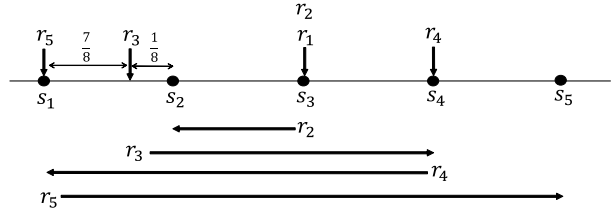}
\end{center}
\caption{Requests and $ALG$'s matching for Case 2-1 of Theorem \ref{lb5}.} \label{fig:5-2-1}
\end{figure}

\noindent
{\bf\boldmath Case 2-2. $ALG$ matches $r_4$ with $s_5$.}\\
See Fig.~\ref{fig:5-2-2}.  The adversary gives a request $r_5$ at $p(s_5)$ and $ALG$ has to match it with $s_1$. The cost of $ALG$ is $\frac{65}{8}$ and the cost of $OPT$ is $\frac{15}{8}$.
The ratio is $\frac{13}{3}$.

\begin{figure}
\begin{center}
\includegraphics[width=8cm]{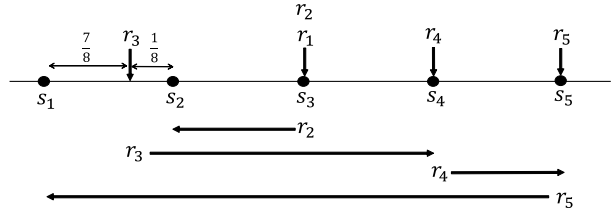}
\end{center}
\caption{Requests and $ALG$'s matching for Case 2-2 of Theorem \ref{lb5}.} \label{fig:5-2-2}
\end{figure}

\smallskip
\smallskip

In any case, the ratio of $ALG$'s cost to $OPT$'s cost is at least $\frac{13}{3}$, which completes the proof.
\qed
\end{proof}

\section{Conclusion}
%

In this paper, we studied two variants of the online metric matching problem.
The first is a restriction where all the servers are placed at one of two positions in the metric space.
For this problem, we presented a greedy algorithm and showed that it is 3-competitive.
We also proved that any deterministic online algorithm has competitive ratio at least 3, giving a matching lower bound.
The second variant is the Online Facility Assignment Problem on a line with a small number of servers.
We showed lower bounds on the competitive ratio $1+ \sqrt{6}$, $\frac{4+\sqrt{73}}{3}$, and $\frac{13}{3}$ when the numbers of servers are 3, 4, and 5, respectively.

One of the future work is to analyze the online metric matching problem with three or more server positions.
Another interesting direction is to consider an optimal online algorithm for the Online Facility Assignment Problem on a line when the numbers of servers are 3, 4, and 5.

\bibliographystyle{splncs04}
\bibliography{ref}

\end{document}